\newcommand{\PP}{\mathbb{P}}
\newcommand{\E}{\mathbb{E}}
\newcommand{\al}{\alpha}
\newtheorem{theorem}{Theorem}
\newtheorem{corollary}{Corollary}
\newtheorem{lemma}{Lemma}
\newtheorem{proposition}{Proposition}
\begin{document}
\title{SWIPT with Intelligent Reflecting Surfaces\\ under Spatial Correlation}

\author{Constantinos Psomas, \IEEEmembership{Senior Member, IEEE}, and Ioannis Krikidis, \IEEEmembership{Fellow, IEEE}
        
\thanks{C. Psomas and I. Krikidis are with the Department of Electrical and Computer Engineering, University of Cyprus, Nicosia, Cyprus; e-mail: \{psomas, krikidis\}@ucy.ac.cy. This work was co-funded by the European Regional
Development Fund and the Republic of Cyprus through the Research and
Innovation Foundation, under the projects INFRASTRUCTURES/1216/0017 (IRIDA) and POST-DOC/0916/0256 (IMPULSE). This work has also received
funding from the European Research Council (ERC) under the European
Union's Horizon 2020 research and innovation programme (Grant agreement
No. 819819).}}

\maketitle

\begin{abstract}
Intelligent reflecting surfaces (IRSs) can be beneficial to both information and energy transfer, due to the gains achieved by their multiple elements. In this work, we deal with the impact of spatial correlation between the IRS elements, in the context of simultaneous wireless information and power transfer. The performance is evaluated in terms of the average harvested energy and the outage probability for random and equal phase shifts. Closed-form analytical expressions for both metrics under spatial correlation are derived. Moreover, the optimal case is considered when the elements are uncorrelated and fully correlated. In the uncorrelated case, random and equal phase shifts provide the same performance. However, the performance of correlated elements attains significant gains when there are equal phase shifts. Finally, we show that correlation is always beneficial to energy transfer, whereas it is a degrading factor for information transfer under random and optimal configurations.
\end{abstract}

\begin{IEEEkeywords}
Intelligent reflecting surfaces, SWIPT, correlation, outage probability, average harvested energy.
\end{IEEEkeywords}

\section{Introduction}
\IEEEPARstart{I}{ntelligent} reflecting surfaces (IRSs) is a technology that has recently received notable consideration by the research community. An IRS refers to a planar array
of flat - and mostly passive - elements, which reflect the incident signal with the help of a dedicated controller \cite{DAI}. The reconfigurable nature of this technology has the potential of improving both coverage and energy efficiency. Moreover, the fact that an IRS can be embedded on a flat surface, makes it an attractive solution for both indoor and outdoor applications.

The benefits of IRSs have been exploited in numerous communication scenarios, e.g. \cite{QN,RUI,CP}, including simultaneous wireless information and power transfer (SWIPT) \cite{PAN,SHI,WU}. In SWIPT systems, the transmitted signal is used to convey both information and energy. Specifically, information decoding and energy harvesting is achieved at the receiver with the employment of a practical scheme such as time-switching or power-splitting (PS) \cite{ZENG}. Thus, the implementation of SWIPT over an IRS, can provide significant gains to both information decoding and energy harvesting. The work in \cite{PAN}, considers a SWIPT system, where an IRS assists in the communication towards information receivers as well as energy transfer to energy receivers. In \cite{SHI}, the authors studied IRS-aided SWIPT systems in the context of secrecy, where the transmit beamforming and phases shifts are jointly optimized to maximize the harvested energy. Moreover, the weighted sum-power maximization problem in a SWIPT system with an IRS was considered in \cite{WU}; the proposed design is shown to enhance the rate-energy performance trade-off.

Since an IRS consists of multiple adjacent elements, it implies that the elements exhibit a certain spatial correlation between them \cite{QN, EB}. However, the effect of correlation on the outage probability and average harvested energy has not been previously studied. Therefore, in this work, we focus on the impact of channel correlation between the elements of an IRS and study the performance from a SWIPT point-of-view. In particular, we evaluate the performance of information and power transfer under spatial correlation for two configurations: random and equal phase shifts. Closed-form expressions for the average harvested energy are analytically derived, as well as a closed-form approximation for the outage probability. Exact expressions for the outage probability are provided for the two extreme cases, namely, when the elements are mutually uncorrelated and fully correlated. The optimal phase configuration is also considered for the two extreme cases. Our results show that correlation is always beneficial to energy harvesting. On the other hand, we show that correlation is a degrading factor to the outage probability with a random or optimal configuration.

\section{System Model}\label{sys_model}
\subsubsection{Topology and Channel Model}
We consider a simple topology, consisting of a transmitter, a receiver and an IRS located between them. The transmitter and the receiver have a single antenna and the IRS is equipped with $M$ reflecting elements. It is assumed that a direct link between the transmitter and the receiver is not available (e.g., due to heavy shadowing) \cite{QN}. All wireless links are assumed to exhibit Rayleigh fading. We define by $h_i$ the channel coefficient between the transmitter and the $i$-th IRS element, and by $g_i$ the channel coefficient between the $i$-th element and the receiver; these are assumed to be distributed according to a complex Gaussian distribution with zero mean and unit variance, i.e. $h_i \sim \mathcal{CN}(0,1)$ and $g_i \sim \mathcal{CN}(0,1)$.

Time is slotted and, at each time slot, the $i$-th element shifts the phase of the incident signal to a certain angle $\phi_i \in [0,2\pi)$. We assume that any two elements are spatially correlated. In particular, we denote by $\rho_{i,k}$ the correlation coefficient for the $i$-th and $k$-th elements, $i, k \in \{1,\dots,M\}$, with $0 \leq \rho_{i,k} \leq 1$, $i\neq j$, and $\rho_{i,i} = 1, \forall i$. Let $\mathbf{R}$ be the correlation matrix, whose $(i,k)$-th element is equal to $\rho_{i,k}$. Then, the received signal at the receiver is written as
$r = \sqrt{P_t (d_1 d_2)^{-\al}} \mathbf{R}^\frac{1}{2} \mathbf{h}^T \mathbf{\Phi} \mathbf{g} \mathbf{R}^\frac{1}{2} x + n$, where $P_t$ is the transmit power, $d_1$ ($d_2$) is the transmitter to IRS (IRS to receiver) distance, $\al$ is the path-loss exponent, $x$ is the transmitted symbol, $\mathbf{h} = [h_1 ~ h_2 ~ \dots ~ h_M]^T$, $\mathbf{g} = [g_1 ~ g_2 ~ \dots ~ g_M]^T$, $n \sim \mathcal{CN}(0,\sigma^2)$ is an additive white Gaussian noise (AWGN) with variance $\sigma^2$ and $\mathbf{\Phi} = \text{diag}[e^{\jmath\phi_1} ~ e^{\jmath\phi_2} ~ \dots ~ e^{\jmath\phi_M}]$ is the diagonal matrix having the phase shift variables. Hence,
\begin{align}\label{gain}
H = \Big\lvert \textstyle\sum_{i=1}^M h_i g_i e^{\jmath\phi_i}\Big\rvert^2,
\end{align}
provides the end-to-end channel gain achieved by the $M$ elements of the IRS.

\subsubsection{Information and Energy Transfer}
The receiver has SWIPT capabilities and splits the power of the received signal into two parts; one part is converted to baseband for information decoding and the other is directed to the rectenna for energy harvesting \cite{ZENG}. Let $0 < \zeta \leq 1$ denote the power-splitting parameter, i.e. $100\zeta\%$ of the received power is used for decoding. The additional circuit noise formed during the baseband conversion phase is modeled as an AWGN with zero mean and variance $\sigma^2_c$. Therefore, the achieved instantaneous signal-to-noise ratio (SNR) at the receiver is
\begin{align}\label{snr}
\eta = \frac{\zeta P_t (d_1 d_2)^{-\al} H}{\zeta\sigma^2 + \sigma^2_c},
\end{align}
where $H$ is given by \eqref{gain}.

Moreover, as $100(1 - \zeta)\%$ of the received signal is passed to the rectifier, the harvested energy is modeled by the following quadratic polynomial \cite{ZENG}
\begin{align}\label{eh}
E = (1-\zeta)P_t (d_1 d_2)^{-\al} H + (1-\zeta)^2P_t^2 (d_1 d_2)^{-2\al} H^2.
\end{align}
Note that any energy harvesting from the AWGN noise is considered to be negligible.

\section{IRS-aided SWIPT under Spatial Correlation}\label{analysis}
In this section, we analytically evaluate the impact of spatial correlation on IRS-aided SWIPT. The analysis is undertaken for two phase shift configurations, namely, random phase shifts and equal phase shifts. Nevertheless, the methodology can be easily adapted to consider other phase configurations or channel models. We first focus on the energy harvesting aspect.

\subsection{Average Harvested Energy}
It follows from \eqref{eh}, that the average harvested energy is
\begin{align}\label{aeh}
\mathcal{E} = (1-\zeta)&P_t (d_1 d_2)^{-\al}(\E_H\{H\}\nonumber\\
&+ (1-\zeta)P_t (d_1 d_2)^{-\al} \E_H\{H^2\}),
\end{align}
where $E_H\{H\}$ and $E_H\{H^2\}$ are the first and second (raw) moments of $H$, respectively. We first state the following lemma for the random phase shifts.

\begin{lemma}\label{moments_random}
If the IRS employs random phase shifts\footnote{In practical IRS implementations, the phase shifts take a finite number of discrete values \cite{WU2}. However, the performance of random phase shifts is not affected by the number of available values \cite{CP}.}, the first and second moments of $H$ are given by
\begin{align}
\mu_1 = M,
\end{align}
and
\begin{align}\label{mu2_random}
\mu_2 = 4M + 2\sum_{i\neq k} (\rho_{i,k}^2+1)^2,
\end{align}
respectively.
\end{lemma}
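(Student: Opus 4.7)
The plan is to expand $H$ and $H^2$ as polynomial sums in the complex Gaussian variables $\{h_i, g_i\}$ and the random phasors $\{e^{\jmath\phi_i}\}$, then evaluate the joint expectation by exploiting the mutual independence of these three families together with the Gaussian structure of the (correlated) channels.

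For $\mu_1$, I would write $H = \sum_{i,k} h_i g_i h_k^* g_k^*\, e^{\jmath(\phi_i-\phi_k)}$ and take the expectation. Since the phases are i.i.d.\ uniform on $[0,2\pi)$ and independent of the channels, $\E[e^{\jmath(\phi_i-\phi_k)}]$ vanishes unless $i=k$, so only the $M$ diagonal terms remain. Each contributes $\E[|h_i|^2]\E[|g_i|^2]=1$ by the independence of $\mathbf{h}$ and $\mathbf{g}$ and the unit-variance assumption, giving $\mu_1=M$ irrespective of $\mathbf{R}$.

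For $\mu_2$, I would expand $H^2=\sum_{i,k,l,m} h_i h_k^* h_l h_m^*\, g_i g_k^* g_l g_m^*\, e^{\jmath(\phi_i-\phi_k+\phi_l-\phi_m)}$. Averaging over the random phases first, the surviving tuples are those whose multisets $\{i,l\}$ and $\{k,m\}$ coincide, which splits into two overlapping cases: (a) $i=k$ and $l=m$, and (b) $i=m$ and $k=l$, whose intersection is the all-equal diagonal $i=k=l=m$. I would then apply inclusion--exclusion to enumerate these tuples without double counting. For each surviving tuple the channel factor reduces to $\E[|h_i|^2|h_k|^2]\,\E[|g_i|^2|g_k|^2]$, a fourth-order mixed moment of correlated circularly symmetric complex Gaussians. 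Isserlis' (Wick's) formula gives $\E[|h_i|^2|h_k|^2]=1+\rho_{i,k}^2$ for $i\neq k$ and $\E[|h_i|^4]=2$, since the correlation inherited from the $\mathbf{R}^{1/2}$ coloring satisfies $\E[h_i h_k^*]=\rho_{i,k}$. Multiplying by the identical factor from $\mathbf{g}$ and summing the two cases yields $\mu_2=4M+2\sum_{i\neq k}(\rho_{i,k}^2+1)^2$.

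The main obstacle is the combinatorial bookkeeping in the second-moment expansion: one must correctly identify the index tuples that survive phase averaging and subtract the double-counted $i=k=l=m$ diagonal so as not to over-count the $\E[|h_i|^4]\E[|g_i|^4]=4$ contribution. Once this is handled, the rest is a direct application of the complex Gaussian Isserlis identity, which explains why the correlation enters only as $\rho_{i,k}^2$ rather than $\rho_{i,k}$ in the final expression.
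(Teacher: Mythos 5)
Your proposal is correct and follows essentially the same route as the paper: expand $H$ and $H^2$ as index sums, use the uniform-phase average to kill all tuples except those with matching index multisets, and evaluate the surviving fourth-order Gaussian moments via Isserlis' theorem with $\E\{h_ih_k^*\}=\rho_{i,k}$. The paper merely organizes the bookkeeping differently (splitting $H$ into its diagonal and off-diagonal parts before squaring, so the three resulting terms replace your inclusion--exclusion over the cases $i=k,l=m$ and $i=m,k=l$), but the surviving contributions and the final count $4M+2\sum_{i\neq k}(\rho_{i,k}^2+1)^2$ coincide exactly.
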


\begin{proof}
	See Appendix \ref{moments_random_prf}.
\end{proof}

A direct consequence of the above lemma is given in the following corollary.

\begin{corollary}
For a uniform correlation, i.e. $\rho_{i,k} = \rho$, $i,k \in \{1,\dots,M\}$, $i\neq k$, Eq. \eqref{mu2_random} is reduced to
\begin{align}
\mu_2 = 2M((M-1)(\rho^2+1)^2 + 2).
\end{align}
\end{corollary}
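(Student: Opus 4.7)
The plan is to obtain the corollary as an immediate specialization of Lemma \ref{moments_random}. The first moment $\mu_1 = M$ carries no correlation dependence, so nothing changes there. For the second moment, the only object that needs attention is the double sum $\sum_{i\neq k}(\rho_{i,k}^2+1)^2$ in \eqref{mu2_random}.

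First I would observe that under the uniform correlation assumption $\rho_{i,k} = \rho$ for every pair $i\neq k$, the summand $(\rho_{i,k}^2+1)^2$ is a constant equal to $(\rho^2+1)^2$, independent of the indices. Second, I would count the number of ordered pairs $(i,k)$ with $i,k\in\{1,\dots,M\}$ and $i\neq k$, which is $M(M-1)$. Substituting into \eqref{mu2_random} then gives
\begin{align}
\mu_2 = 4M + 2 M(M-1)(\rho^2+1)^2,
\end{align}
and a final factorization of $2M$ from both terms yields the claimed expression $\mu_2 = 2M((M-1)(\rho^2+1)^2 + 2)$.

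There is essentially no obstacle here: the argument is pure bookkeeping once Lemma \ref{moments_random} is granted. The only small point to be careful about is using the ordered-pair count $M(M-1)$ rather than the unordered $\binom{M}{2}$, which is consistent with the notation $\sum_{i\neq k}$ used in \eqref{mu2_random}.
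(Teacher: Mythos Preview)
Your proposal is correct and matches the paper's treatment: the corollary is stated there as an immediate consequence of Lemma~\ref{moments_random} with no separate proof, and your substitution of the constant summand together with the ordered-pair count $M(M-1)$ is exactly the intended bookkeeping. The care you take in distinguishing $M(M-1)$ from $\binom{M}{2}$ is appropriate and consistent with the notation.
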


Obviously, for $\rho = 0$ (uncorrelated) and $\rho = 1$ (fully correlated), we have $\mu_2 = 2M(M+1)$ and $\mu_2 = 4M(2M-1)$, respectively. Furthermore, it is clear to see that at large values of $M$, the second moment scales with $M^2$.

\begin{lemma}\label{moments_equal}
If the IRS employs equal phase shifts, i.e. $\phi_i = \phi_k$, $\forall i,k \in \{1,\dots,M\}$, the first moment of $H$ is
\begin{align}
\mu_1 = M + \sum_{i, k}\rho_{i,k}^2,\label{mu1_equal}
\end{align}
and the second moment is given by
\begin{align}
\mu_2 = 4M &+ 2\sum_{i, k} (\rho_{i,k}^2+1)^2 + 2\rho_{i,k}^2(\rho_{i,k}^2+4)\nonumber\\
&+4\sum_{i, k, l} (\rho_{i,k}\rho_{i,l} + \rho_{k,l})^2 + 2 \rho_{i,k}^2\rho_{i,l}^2\nonumber\\
&+\sum_{i, k, l, m} (\rho_{i,k}\rho_{l,m} + \rho_{i,m}\rho_{k,l})^2,\label{mu2_equal}
\end{align}
where $i,k,l,m \in \{1,\dots,M\}$ are mutually unequal.
\end{lemma}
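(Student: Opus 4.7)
The plan is to exploit the fact that with equal phase shifts the common phase factor drops out of the modulus, so $H=|S|^2$ with $S:=\sum_{i=1}^M h_i g_i=\mathbf{h}^T\mathbf{g}$, where after the $\mathbf{R}^{1/2}$ transformation $\mathbf{h}$ and $\mathbf{g}$ are independent, zero-mean, circularly symmetric complex Gaussian vectors with covariance matrix $\mathbf{R}$; in particular $\E[h_i\bar h_k]=\E[g_i\bar g_k]=\rho_{i,k}$. Moments of $H$ are then even moments of a bilinear form in two independent correlated Gaussian vectors, and Isserlis'/Wick's theorem is the natural tool.

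For $\mu_1$ I would expand $|S|^2=\sum_{i,k}h_i\bar h_k g_i\bar g_k$, use independence of $\mathbf{h}$ and $\mathbf{g}$ to factor the expectation into $\rho_{i,k}\cdot\rho_{i,k}$, and obtain $\mu_1=\sum_{i,k}\rho_{i,k}^2$; isolating the $M$ diagonal terms recovers \eqref{mu1_equal}. For $\mu_2=\E[|S|^4]$ the analogous expansion gives $\mu_2=\sum_{i,k,l,m}\E[h_i\bar h_k h_l\bar h_m]\,\E[g_i\bar g_k g_l\bar g_m]$. Applying Isserlis' theorem to each complex fourth-order moment yields $\E[h_i\bar h_k h_l\bar h_m]=\rho_{i,k}\rho_{l,m}+\rho_{i,m}\rho_{k,l}$ (and likewise for $g$), so that $\mu_2=\sum_{i,k,l,m}(\rho_{i,k}\rho_{l,m}+\rho_{i,m}\rho_{k,l})^2$ with the four indices running freely over $\{1,\dots,M\}$.

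The remaining task is a purely combinatorial repackaging, and it is where I expect the real work to lie: partition the unrestricted quadruple sum according to how many of $i,k,l,m$ are mutually distinct. The pattern with all four indices equal contributes the constant $4M$. The two-distinct-index patterns (both the three-plus-one and the two-plus-two matchings) collapse to pair sums and, after collecting like $\rho$-products, yield the $2\sum_{i\ne k}[(\rho_{i,k}^2+1)^2+2\rho_{i,k}^2(\rho_{i,k}^2+4)]$ contribution. The three-distinct-index patterns produce the triple sum carrying the prefactor $4$, and the all-distinct pattern delivers the final quadruple sum of \eqref{mu2_equal}. The bookkeeping requires tracking the multiplicity of each partition shape (from the $\binom{4}{\cdot}$ ways of assigning coincidences), using symmetry in the dummy indices to merge identical sub-sums, and regrouping the surviving $\rho$-products into the compact squared-binomial form appearing in the statement.
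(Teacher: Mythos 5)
Your proposal is correct and rests on the same two pillars as the paper's own proof: the Isserlis identity $\E\{h_i h_l h_k^* h_m^*\}=\rho_{i,k}\rho_{l,m}+\rho_{i,m}\rho_{k,l}$ of \eqref{multivar} and a case analysis over index coincidences. The organization differs slightly: the paper first splits $H$ into the diagonal part $\sum_i|h_i|^2|g_i|^2$ plus the off-diagonal part $\sum_{i\neq k}h_ig_ih_k^*g_k^*$ and expands the square term by term (three blocks, each with its own sub-cases), whereas you go directly to the unrestricted master sum $\mu_2=\sum_{i,k,l,m}(\rho_{i,k}\rho_{l,m}+\rho_{i,m}\rho_{k,l})^2$ over free indices and then partition it by coincidence pattern. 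Your route is a clean reorganization that subsumes the paper's three blocks into one formula, and I checked that it reproduces \eqref{mu2_equal} exactly: the all-equal pattern gives $4M$; the fourteen two-value assignments per unordered pair $\{i,k\}$ contribute $4(\rho_{i,k}^2+1)^2+8\rho_{i,k}^4+32\rho_{i,k}^2$, i.e. $2\big[(\rho_{i,k}^2+1)^2+2\rho_{i,k}^2(\rho_{i,k}^2+4)\big]$ per ordered pair; the six three-value patterns merge under relabeling of the dummy indices into $4\big[(\rho_{i,k}\rho_{i,l}+\rho_{k,l})^2+2\rho_{i,k}^2\rho_{i,l}^2\big]$; and the all-distinct pattern is the final quadruple sum verbatim. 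The one point you flag but do not execute, and where it is easy to slip, is the multiplicity bookkeeping: if each coincidence pattern is summed over \emph{ordered} tuples of distinct values, every assignment is counted exactly once and no additional symmetry factors may be introduced (summing the fourteen patterns over ordered pairs, for instance, would double-count). With that caveat handled, your argument is complete and equivalent to the paper's.
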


\begin{proof}
See Appendix \ref{moments_equal_prf}.
\end{proof}

In contrast to the random case, the mean in this scenario is positively affected by the correlation.

\begin{corollary}
For a uniform correlation, i.e. $\rho_{i,k} = \rho$, $i,k \in \{1,\dots,M\}$, $i\neq k$, Eq. \eqref{mu1_equal} reduces to
\begin{align}
\mu_1 = M((M-1)\rho^2+1),
\end{align}
and Eq. \eqref{mu2_equal} is reduced to
\begin{align}
\mu_2 &= 4M + 4\binom{M}{2}(3\rho^4+10\rho^2+1)\nonumber\\
&\quad+ 24\binom{M}{3}(3\rho^4+2\rho^3+\rho^2) + 96\binom{M}{4}\rho^4.
\end{align}
\end{corollary}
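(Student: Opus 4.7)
The plan is to specialize Lemma~\ref{moments_equal} to the uniform-correlation setting. The only mechanism needed is substitution: replace $\rho_{i,k}$ by $\rho$ for every pair of distinct indices appearing in the summands of \eqref{mu1_equal} and \eqref{mu2_equal}, and then count the number of ordered tuples of mutually distinct indices that index each sum.

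For $\mu_1$, I would first observe that $\sum_{i,k}\rho_{i,k}^2$ (with $i\neq k$) ranges over the $M(M-1)$ ordered pairs, each contributing $\rho^2$; the expression $\mu_1 = M + M(M-1)\rho^2 = M((M-1)\rho^2+1)$ then follows immediately.

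For $\mu_2$, I would next recall that the three sums in \eqref{mu2_equal} are over ordered tuples of $2$, $3$, and $4$ mutually distinct indices, whose cardinalities are $M(M-1)=2\binom{M}{2}$, $M(M-1)(M-2)=6\binom{M}{3}$, and $M(M-1)(M-2)(M-3)=24\binom{M}{4}$, respectively. Under the substitution $\rho_{i,k}=\rho$, each summand collapses to a fixed polynomial in $\rho$: the pair-sum summand reduces to $2(\rho^2+1)^2+2\rho^2(\rho^2+4)$; the triple-sum summand to $4(\rho^2+\rho)^2+2\rho^4$ (since $\rho_{i,k}\rho_{i,l}=\rho^2$ while $\rho_{k,l}=\rho$); and the quadruple-sum summand to $(2\rho^2)^2=4\rho^4$. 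Multiplying each of these polynomials by the corresponding combinatorial count and grouping powers of $\rho$ produces the claimed expression.

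The hardest step will simply be the bookkeeping in the second moment: tracking how the coefficients $\binom{M}{k}$ for $k=2,3,4$ emerge as the three polynomial contributions are collected and regrouped by degree in $\rho$. Beyond that, the derivation reduces to the elementary identity $k!\binom{M}{k}=M(M-1)\cdots(M-k+1)$ together with a routine polynomial expansion; no probabilistic or analytic input beyond Lemma~\ref{moments_equal} is required. As a sanity check on the form of the answer, I would verify the limits $\rho=0$ (which must recover the uncorrelated moment already implied by Lemma~\ref{moments_random}) and $\rho=1$ (where $H$ coincides with a fully correlated coherent combination of $M$ unit-variance factors, giving an $M^4$-scaling).
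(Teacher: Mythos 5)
Your overall strategy—substitute $\rho_{i,k}=\rho$ into Lemma~\ref{moments_equal} and multiply each summand by the number of ordered tuples of distinct indices, namely $2\binom{M}{2}$, $6\binom{M}{3}$ and $24\binom{M}{4}$—is exactly what the corollary requires (the paper gives no separate proof, as it is a direct specialization). The gap is in how you parse \eqref{mu2_equal}: the leading factors $2$ and $4$ on the pair and triple sums multiply the \emph{entire} bracketed summand, i.e. the pair contribution is $2\sum_{i,k}\bigl[(\rho_{i,k}^2+1)^2+2\rho_{i,k}^2(\rho_{i,k}^2+4)\bigr]$ and the triple contribution is $4\sum_{i,k,l}\bigl[(\rho_{i,k}\rho_{i,l}+\rho_{k,l})^2+2\rho_{i,k}^2\rho_{i,l}^2\bigr]$. (This is the reading consistent with Appendix~\ref{moments_equal_prf}: the pair terms collected there are $2(\rho_{i,k}^2+1)^2+4\rho_{i,k}^4+16\rho_{i,k}^2$ and the triple terms are $4(\rho_{i,k}\rho_{i,l}+\rho_{k,l})^2+8\rho_{i,k}^2\rho_{i,l}^2$.) You instead apply the prefactor only to the first term, writing the per-pair value as $2(\rho^2+1)^2+2\rho^2(\rho^2+4)$ and the per-triple value as $4(\rho^2+\rho)^2+2\rho^4$; the correct values are $2(\rho^2+1)^2+4\rho^2(\rho^2+4)=2(3\rho^4+10\rho^2+1)$ and $4(\rho^2+\rho)^2+8\rho^4=4(3\rho^4+2\rho^3+\rho^2)$.

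Executed as written, your bookkeeping yields $4\binom{M}{2}(2\rho^4+6\rho^2+1)$ and $12\binom{M}{3}(3\rho^4+4\rho^3+2\rho^2)$ instead of the stated $4\binom{M}{2}(3\rho^4+10\rho^2+1)$ and $24\binom{M}{3}(3\rho^4+2\rho^3+\rho^2)$, so the proposal does not reach the claimed expression. Your own sanity check at $\rho=1$ would have caught this: for $M=2$ the corollary gives $4M+56\binom{M}{2}=8+56=64=4M^4$, whereas your version gives $8+36\binom{M}{2}=44\neq 64$. The quadruple-sum term and the $\mu_1$ computation are correct; once the prefactors are distributed over both terms of each summand, the rest of your argument goes through verbatim.
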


As such, $\rho = 0$ results in $\mu_1 = M$, $\mu_2 = 2M(M+1)$ and $\rho = 1$ gives $\mu_1 = M^2$, $\mu_2 = 4M^4$. Observe that for $\rho = 1$ (full correlation), $\mu_2$ increases proportionally to $M^4$, which signifies the massive gains that can be achieved in energy harvesting. Based on \eqref{aeh}, we can now state the following.

\begin{theorem}
The achieved average harvested energy is
\begin{align}\label{aheu}
\mathcal{E} = (1-\zeta) P_t (d_1 d_2)^{-\al} (\mu_1 + (1-\zeta)P_t (d_1 d_2)^{-\al} \mu_2),
\end{align}
where $\mu_1$ and $\mu_2$ are either given by Lemma \ref{moments_random} or Lemma \ref{moments_equal}.
\end{theorem}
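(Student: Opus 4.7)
The plan is to observe that the theorem is essentially a direct substitution result: equation \eqref{aeh} already expresses $\mathcal{E}$ in terms of $\E_H\{H\}$ and $\E_H\{H^2\}$, so all that remains is to identify these two quantities with $\mu_1$ and $\mu_2$ from the preceding lemmas and collect the constants.

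More concretely, I would start from the harvested energy model in \eqref{eh}, which is a quadratic polynomial in the end-to-end channel gain $H$ with deterministic coefficients $(1-\zeta)P_t(d_1 d_2)^{-\al}$ and $(1-\zeta)^2 P_t^2 (d_1 d_2)^{-2\al}$. Applying linearity of expectation over the random variable $H$ (the path-loss, transmit power, and splitting ratio being deterministic) yields
\begin{align*}
\mathcal{E} = \E\{E\} = (1-\zeta)P_t(d_1 d_2)^{-\al}\E_H\{H\} + (1-\zeta)^2 P_t^2 (d_1 d_2)^{-2\al}\E_H\{H^2\},
\end{align*}
which is exactly \eqref{aeh} after factoring $(1-\zeta)P_t(d_1 d_2)^{-\al}$ out of the bracket.

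Next, I would invoke Lemma~\ref{moments_random} under the random phase shift configuration, or Lemma~\ref{moments_equal} under the equal phase shift configuration, to substitute $\E_H\{H\} = \mu_1$ and $\E_H\{H^2\} = \mu_2$. Since these lemmas are stated without any additional conditioning, the substitution is unconditional and the expression \eqref{aheu} follows immediately.

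Because the two lemmas already carry all of the combinatorial weight of the proof, there is essentially no obstacle in the theorem itself; the only mild subtlety worth flagging is to justify that the AWGN contribution to the harvested energy is negligible (as explicitly noted in the model description after \eqref{eh}), so that no extra terms beyond the two moments of $H$ appear in $\mathcal{E}$. The statement then holds uniformly in the choice of correlation matrix $\mathbf{R}$, with the distinction between random and equal phase shifts manifesting only through which pair $(\mu_1,\mu_2)$ is plugged in.
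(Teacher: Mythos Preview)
Your proposal is correct and mirrors exactly what the paper does: the theorem is stated as an immediate consequence of \eqref{aeh} together with the identifications $\E_H\{H\}=\mu_1$ and $\E_H\{H^2\}=\mu_2$ from Lemma~\ref{moments_random} or Lemma~\ref{moments_equal}. The paper does not even provide a separate proof for this theorem, so your derivation from \eqref{eh} via linearity of expectation and factoring is, if anything, slightly more explicit than the original.
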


From Lemma \ref{moments_random} and Lemma \ref{moments_equal}, we can see that both phase configurations achieve the same average harvested energy in the uncorrelated case ($\rho_{i,k} = 0, i \neq k$). On the other hand, for any $\rho_{i,k} > 0, i \neq k$, an equal phase shift configuration provides higher energy harvesting than the random configuration. In fact, for uniform correlation $\rho = 1$, the equal phase configuration achieves the same performance as the optimal phase configuration since
\begin{align}\label{opt1}
\E\left\{\left(\sum_{i=1}^M |h_i||g_i| \right)^2\right\} = \E\left\{M^2 |h_i|^2|g_i|^2\right\} = M^2,
\end{align}
which is equal to $\mu_1$ and
\begin{align}\label{opt2}
\E\left\{\left(\sum_{i=1}^M |h_i||g_i| \right)^4\right\} = \E\left\{M^4 |h_i|^4|g_i|^4\right\} = 4 M^4,
\end{align}
which corresponds to $\mu_2$. Keep in mind that the equal phase shift value depends on the considered channel model and should be adjusted accordingly in order to achieve the optimal performance, e.g. in the case of a uniform linear array, this value is a function of the angles of arrival and departure. For the sake of completeness, we also evaluate the optimal phase configuration with no correlation. Specifically, we have $\E\{(\sum_{i=1}^M |h_i||g_i|)^2\} = M + M(M-1)\pi^2/16$ \cite{CP} and, by using similar arguments as in Appendices \ref{moments_random_prf} and \ref{moments_equal_prf}, we obtain $\E\{(\sum_{i=1}^M |h_i||g_i|)^4\} = 4M + 6\binom{M}{2}(3\pi^2/16+1) + 9\binom{M}{3}\pi^2/4 + 3\binom{M}{4}\pi^4/32$.

\subsection{Outage Probability}
We now turn our attention to the information transfer. Let $\tau$ be a non-negative pre-defined rate threshold. Then,
\begin{align}\label{op}
P_o = \PP\left\{\log_2\left(1+\eta\right) < \tau\right\},
\end{align}
defines the information outage probability and where $\eta$ is the SNR, given by \eqref{snr}. The following theorem provides an approximation of the outage probability.

\begin{theorem}\label{pop}
The outage probability is approximated by $P_o \approx \frac{\gamma(\kappa,\xi/\theta)}{\Gamma(\kappa)}$, with $\kappa = \frac{\mu_1^2}{\mu_2 - \mu_1^2}$, $\theta = \frac{\mu_2 - \mu_1^2}{\mu_1}$, and
\begin{align}\label{xi}
\xi = \frac{(2^\tau - 1)(\zeta\sigma^2+\sigma^2_c)}{\zeta P_t (d_1 d_2)^{-\al}},
\end{align}
where $\mu_1$ and $\mu_2$ are given either by Lemma \ref{moments_random} or Lemma \ref{moments_equal} and $\Gamma(\cdot)$ and $\gamma(\cdot,\cdot)$ are the complete and lower incomplete gamma functions, respectively.
\end{theorem}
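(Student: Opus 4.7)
The plan is to reduce the outage event to a simple tail condition on the random variable $H$ and then approximate the distribution of $H$ by a Gamma random variable via a two-moment match, using the moments already computed in Lemmas \ref{moments_random} and \ref{moments_equal}.

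First I would rewrite the event in \eqref{op} explicitly. Substituting \eqref{snr} into $\log_2(1+\eta)<\tau$ and solving for $H$ yields the equivalent event $\{H<\xi\}$, where $\xi$ is precisely the threshold given by \eqref{xi}; this step is algebraic and produces the argument $\xi/\theta$ of the lower incomplete gamma function once the scale $\theta$ is identified. So the problem reduces to computing (or approximating) $\PP\{H<\xi\}$, i.e., the CDF of $H$ evaluated at $\xi$.

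Second, I would argue that an exact CDF of $H=\bigl|\sum_i h_ig_ie^{\jmath\phi_i}\bigr|^2$ under arbitrary spatial correlation is intractable, because $H$ is a squared magnitude of a sum of products of correlated complex Gaussians with deterministic phases. I would therefore invoke the standard moment-matching heuristic: approximate the non-negative random variable $H$ by a Gamma$(\kappa,\theta)$ distribution whose first two moments coincide with $\mu_1=\E\{H\}$ and $\mu_2=\E\{H^2\}$. Since the Gamma mean is $\kappa\theta$ and the Gamma variance is $\kappa\theta^2$, matching gives $\kappa\theta=\mu_1$ and $\kappa\theta^2=\mu_2-\mu_1^2$, which solve to the values $\kappa=\mu_1^2/(\mu_2-\mu_1^2)$ and $\theta=(\mu_2-\mu_1^2)/\mu_1$ stated in the theorem. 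Using the fact that the CDF of a Gamma$(\kappa,\theta)$ variable at a point $x$ is $\gamma(\kappa,x/\theta)/\Gamma(\kappa)$, evaluating at $x=\xi$ yields the claimed expression.

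Finally, I would plug in $\mu_1$ and $\mu_2$ from Lemma \ref{moments_random} or Lemma \ref{moments_equal} depending on the phase configuration. The main obstacle (and the only non-routine step) is justifying the Gamma approximation itself; I would support it by noting that in limiting regimes the approximation is tight — for example, in the uncorrelated case with optimal phase alignment $H$ reduces to a sum of independent positive random variables, which by the standard Welch--Satterthwaite/Gamma-moment-matching principle is very well modelled by a Gamma law, and in the fully correlated equal-phase case the equalities \eqref{opt1}--\eqref{opt2} fix the first two moments with the correct $M^2$ and $M^4$ scaling. Numerical validation against Monte Carlo simulation would complete the justification, but no further analytical work is needed beyond the moment match.
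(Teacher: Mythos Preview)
Your proposal is correct and follows essentially the same route as the paper: rewrite the outage event as $\PP\{H<\xi\}$ from \eqref{op} and \eqref{snr}, then approximate $H$ by a Gamma random variable via a two-moment match using $\mu_1,\mu_2$ from Lemma~\ref{moments_random} or Lemma~\ref{moments_equal}, and read off the CDF $\gamma(\kappa,\xi/\theta)/\Gamma(\kappa)$. The paper's proof is in fact shorter than yours, as it does not attempt the limiting-regime or Welch--Satterthwaite justification you sketch; it simply invokes moment matching and leaves the quality of the approximation to the numerical results.
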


\begin{proof}
See Appendix \ref{op_prf}.
\end{proof}

As the moments of $H$ are larger with the equal phase configuration, it follows that it achieves a lower outage probability (smaller $\kappa$, larger $\theta$). Theorem \ref{pop} provides a closed-form expression for the outage probability and is a good approximation, as we show in the numerical results. Nevertheless, in what follows, we also provide exact analytical expression for the uncorrelated ($\rho_{i,k} = 0$, $i\neq k$) and fully correlated ($\rho_{i,k} = 1$) cases. The proposition below is given without a proof, as it follows directly from \cite{CP}.

\begin{proposition}\label{opu}
In the uncorrelated case, i.e. $\rho_{i,k} = 0$, $i\neq k$, the outage probability is given by
\begin{align}
P_o = 1-\frac{2}{\Gamma(M)} \xi^{\frac{M}{2}} K_M\left(2 \sqrt{\xi}\right),
\end{align}
where $\xi$ is given by \eqref{xi} and $K_M(\cdot)$ is the modified Bessel function of the second kind of order $M$.
\end{proposition}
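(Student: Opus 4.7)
The plan is to directly compute $P_o = \PP\{H < \xi\}$, which is the form the outage condition $\log_2(1+\eta) < \tau$ takes after substituting \eqref{snr} and simplifying, with $\xi$ as in \eqref{xi}. In the uncorrelated case $\mathbf{R} = \mathbf{I}$, so $H = \lvert\sum_{i=1}^{M} h_i g_i e^{\jmath\phi_i}\rvert^2$ with $h_i, g_i$ iid $\mathcal{CN}(0,1)$. Since $e^{\jmath\phi_i} h_i$ has the same distribution as $h_i$ by circular symmetry of the complex Gaussian, the phase shifts $\phi_i$ can be absorbed and $H$ is distributed as $\lvert\sum_{i=1}^{M} h_i g_i\rvert^2$; in particular the outage probability does not depend on the chosen phase configuration (random or equal), which is precisely why a single expression suffices.

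Next I would condition on $\mathbf{g}$. Given $\mathbf{g}$, the sum $\sum_{i=1}^{M} h_i g_i$ is a linear combination of iid $\mathcal{CN}(0,1)$ variables with coefficients $g_i$, hence $\mathcal{CN}(0,\lVert\mathbf{g}\rVert^2)$. Consequently, conditional on $Y := \lVert\mathbf{g}\rVert^2 = \sum_{i=1}^{M}\lvert g_i\rvert^2$, the variable $H$ is exponential with mean $Y$, and unconditionally $Y \sim \text{Gamma}(M,1)$. Writing $P_o = \E_Y\{1 - e^{-\xi/Y}\}$ then gives
\begin{align*}
P_o = 1 - \frac{1}{\Gamma(M)} \int_0^\infty y^{M-1} e^{-y - \xi/y}\, dy.
\end{align*}

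The final step is to recognize the remaining integral as a standard representation of the modified Bessel function of the second kind, namely $\int_0^\infty y^{\nu-1} e^{-y - \xi/y}\, dy = 2\,\xi^{\nu/2} K_\nu(2\sqrt{\xi})$, specialized to $\nu = M$. Substituting yields the claimed closed form $P_o = 1 - \frac{2}{\Gamma(M)}\xi^{M/2} K_M(2\sqrt{\xi})$.

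The only real obstacle is ensuring the distributional argument in the first step is airtight (that the phases truly disappear in the uncorrelated Rayleigh case, so that the same expression covers both random and equal $\phi_i$). Once this is observed, the remaining computation is a conditioning argument followed by a tabulated Bessel integral, so no technical difficulty arises; this is also why the statement can be attributed directly to \cite{CP} without a separate proof.
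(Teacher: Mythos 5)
Your proof is correct: the reduction $H\stackrel{d}{=}\lvert\sum_i h_i g_i\rvert^2$ by circular symmetry, the conditioning on $Y=\lVert\mathbf{g}\rVert^2\sim\mathrm{Gamma}(M,1)$ so that $H\mid Y$ is exponential with mean $Y$, and the tabulated integral $\int_0^\infty y^{M-1}e^{-y-\xi/y}\,dy=2\xi^{M/2}K_M(2\sqrt{\xi})$ all check out and yield exactly the stated expression. The paper omits this proof and defers to \cite{CP}, but your argument is the standard derivation underlying that result (and is consistent with the $M=1$ product-distribution CDF the authors invoke in the proof of Proposition \ref{opc}), so nothing is missing.
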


The above expression is valid for both configurations (random or equal). However, for the correlated case, we need to consider them separately. In the case of random phase shifts, the channel gain is $H = |h|^2 |g|^2 \big\lvert \sum_{i=1}^M e^{\jmath\phi_i}\big\rvert^2$ and we can state the following.

\begin{proposition}\label{opc}
In the fully correlated case, i.e. $\rho_{i,k} = 1$, $\forall i,k$, with random phase shifts, the outage probability is given by
\begin{align}	P_o = 1-\frac{2}{M} \int_0^\infty \sqrt{\frac{\xi}{\psi}} e^{-\frac{\psi}{M}} K_1\left(2 \sqrt{\frac{\xi}{\psi}}\right) d\psi,
\end{align}
where $\xi$ is given by \eqref{xi}.
\end{proposition}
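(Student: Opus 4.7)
The plan is to decompose $H$ into three statistically tractable factors, condition on the phase-sum piece, and invoke the known CDF of a product of two unit exponentials. Full correlation ($\rho_{i,k}=1$) makes $\mathbf{R}$ rank one, so almost surely $h_1=\cdots=h_M=:h\sim\mathcal{CN}(0,1)$ and $g_1=\cdots=g_M=:g\sim\mathcal{CN}(0,1)$; therefore $H=|h|^2|g|^2\Psi$ with $\Psi:=\big|\sum_{i=1}^M e^{\jmath\phi_i}\big|^2$ independent of $h,g$, exactly as noted just before the proposition statement.

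With $X:=|h|^2$ and $Y:=|g|^2$ i.i.d.\ unit-mean exponentials, their product $Z:=XY$ has density $f_Z(z)=\int_0^\infty x^{-1} e^{-x}e^{-z/x}\,dx=2K_0(2\sqrt{z})$ from the standard Bessel integral representation; then $\PP(Z>z)=\int_z^\infty 2K_0(2\sqrt{u})\,du = 2\sqrt{z}\,K_1(2\sqrt{z})$ via the substitution $v=2\sqrt{u}$ and the identity $\int v K_0(v)\,dv=-vK_1(v)$. Conditioning on $\Psi=\psi$ converts the outage event $\{H<\xi\}$ into $\{Z<\xi/\psi\}$, giving $\PP(H<\xi\mid\Psi=\psi)=1-2\sqrt{\xi/\psi}\,K_1(2\sqrt{\xi/\psi})$. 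For $\phi_i$ uniform on $[0,2\pi)$, the real and imaginary parts of $\sum_i e^{\jmath\phi_i}$ are zero-mean i.i.d.\ sums with variance $M/2$, so under the Gaussian (CLT) approximation the sum behaves as $\mathcal{CN}(0,M)$ and $\Psi$ is exponential with mean $M$, yielding $f_\Psi(\psi)=M^{-1}e^{-\psi/M}$.

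Putting the pieces together by total probability, $P_o=\int_0^\infty\bigl(1-2\sqrt{\xi/\psi}\,K_1(2\sqrt{\xi/\psi})\bigr)M^{-1}e^{-\psi/M}\,d\psi$, which rearranges into the stated formula. The main obstacle is this last ingredient: the exponential law for $\Psi$ is not exact at finite $M$, so the proposition implicitly leans on the standard Rayleigh/CLT approximation for the magnitude of a random phase-only walk (it sharpens rapidly in $M$). The product-of-exponentials CDF and the conditioning step, by contrast, are essentially routine and well known in the SWIPT literature.
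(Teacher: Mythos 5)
Your proposal is correct and follows essentially the same route as the paper's proof: decompose $H=|h|^2|g|^2\psi$, use the CDF $1-2\sqrt{z}K_1(2\sqrt{z})$ of the product of two unit exponentials (which the paper cites from a reference while you derive it from $f_Z(z)=2K_0(2\sqrt{z})$), approximate $\psi$ as exponential with mean $M$ via the CLT, and average over $f_\psi$. Your explicit remark that the exponential law for $\psi$ is only a CLT approximation at finite $M$ is a fair and slightly more careful reading than the paper's own wording, but it does not change the argument.
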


\begin{proof}
	See Appendix \ref{opc_prf}.
\end{proof}

Finally, for the correlated case with equal phase shifts, we have $H = M^2 |h|^2 |g|^2$. The proof for this case is omitted as it follows similar arguments with the previous results.

\begin{proposition}\label{opc2}
In the fully correlated case, i.e. $\rho_{i,k} = 1$, $\forall i,k$, with equal phase shifts, the outage probability is
\begin{align}
P_o = 1-\frac{2}{M} \sqrt{\xi} K_1\left(2 \frac{\sqrt{\xi}}{M}\right),
\end{align}
where $\xi$ is given by \eqref{xi}.
\end{proposition}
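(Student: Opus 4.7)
The plan is to reduce the outage computation to the distribution of the product of two independent unit-mean exponential random variables. Under full correlation ($\rho_{i,k}=1$ for all $i,k$) the per-element channels collapse to $h_i = h$ and $g_i = g$ with $h,g \sim \mathcal{CN}(0,1)$; combined with equal phase shifts, the sum in \eqref{gain} becomes $|\sum_i h g e^{\jmath\phi}|^2 = M^2 |h|^2 |g|^2$, which is the stated expression for $H$.

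From \eqref{snr} and the definition \eqref{op} of the outage event, $\log_2(1+\eta) < \tau$ is equivalent to $H < \xi$ with $\xi$ as in \eqref{xi}. Substituting $H = M^2|h|^2|g|^2$, the outage probability reduces to $P_o = \PP\{|h|^2|g|^2 < \xi/M^2\}$. Since $|h|^2$ and $|g|^2$ are independent exponential random variables with unit mean, I would compute the CDF of their product by conditioning on one of them:
\begin{align}
P_o = \int_0^\infty \left(1 - e^{-\xi/(M^2 x)}\right) e^{-x}\, dx = 1 - \int_0^\infty e^{-x - \xi/(M^2 x)}\, dx.
\end{align}

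The final step is to evaluate the remaining integral by invoking the standard integral representation of the modified Bessel function of the second kind, namely $\int_0^\infty e^{-x - t/x}\, dx = 2\sqrt{t}\, K_1(2\sqrt{t})$, with $t = \xi/M^2$. This immediately yields $P_o = 1 - \frac{2\sqrt{\xi}}{M} K_1\!\left(\frac{2\sqrt{\xi}}{M}\right)$, matching the claim. The only nontrivial step is recognizing that this exponential-kernel integral produces $K_1$; everything else is a direct substitution, which is why the authors state the result without a proof and simply refer to the analogous derivation used for Proposition \ref{opc}.
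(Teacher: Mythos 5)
Your proposal is correct and follows essentially the same route the paper intends: reduce the outage event to $|h|^2|g|^2 < \xi/M^2$ and apply the CDF $F_{|h|^2|g|^2}(z) = 1-2\sqrt{z}\,K_1(2\sqrt{z})$, which the paper simply cites from its reference [JDG] (in the proof of Proposition \ref{opc}) while you derive it inline via the standard Bessel integral. No gaps.
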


The above corresponds to the performance of the optimal phase configuration with full correlation. For the uncorrelated case, we can use the result from \cite[Theorem 3]{CP}.

\begin{figure}[t]\centering
	\includegraphics[width=0.9\linewidth]{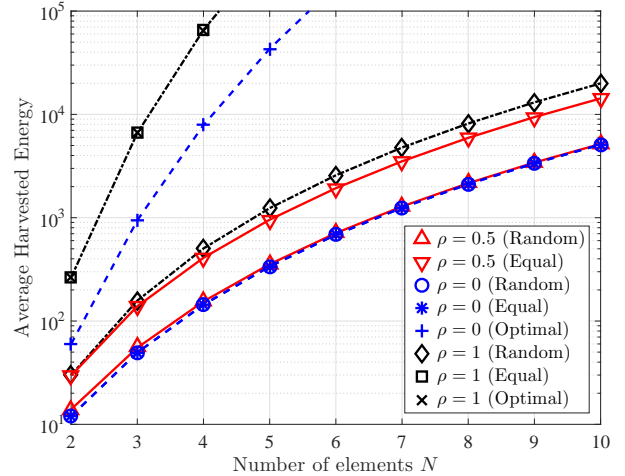}
	\caption{Average harvested energy in terms of $N$; simulation and theoretical results are depicted with markers and lines, respectively.}\label{fig1}
\end{figure}

\section{Numerical Results \& Discussion}\label{numerical}
We verify the above analytical framework with Monte Carlo simulations. The following parameters are used: $P_t = 0$ dB, $\sigma^2 = 0$ dB, $\sigma^2_c = 0$ dB, $d_1=d_2=1$ m, $\al=4$, $\zeta = 0.5$ and $\tau = 1$ bpcu. Moreover, we consider a square IRS with $M = N\times N$ elements. Finally, we assume $\rho_{i,k} = \rho^{|i-k|}$, $0 \leq \rho \leq 1$, where $|i-k|$ describes the distance between elements $i$ and $k$ \cite{QN}. In other words, the correlation between two elements decreases as the distance between them increases. The maximum correlation between distinct elements is equal to $\rho$ and occurs between adjacent elements. Note that other spatial correlation models can be considered, which would provide similar observations.

Fig. \ref{fig1} illustrates the average harvested energy for the considered configurations, in terms of $N$. In all cases, the uncorrelated and fully correlated scenarios correspond to the lower and upper bound, respectively. For $\rho=0$, the random and equal configurations achieve the same performance. On the other hand, when $\rho > 0$, the equal phase configuration outperforms the random case. Moreover, with random phases and $\rho=0.5$, the performance quickly converges to the uncorrelated case as $M$ increases. This is because the correlation between the elements is reduced, i.e., $\rho^{|i-k|} \to 0$ as $|i-k|\to \infty$ for $\rho < 1$. Finally, the simulation results (depicted with markers) match our theoretical expressions (depicted with lines), which verifies our analytical framework.

Fig. \ref{fig2} depicts the outage probability achieved by the considered configurations. Firstly, for $\rho=0.5$, the figure shows that Theorem \ref{pop} is a good approximation of the outage probability. Similarly to the average harvested energy, the equal case outperforms the random one. Furthermore, under random and optimal phase configurations, the lowest outage probability is achieved when the elements are uncorrelated. In other words, contrary to energy harvesting (which is a long term operation), correlation is a degrading factor for information transfer with these configurations. The opposite occurs when equal phase shifts are considered, that is, the fully correlated case achieves the best performance. 

The above observations indicate that the impact of spatial correlation differs, depending on the phase configuration and the considered performance metric. This remark is general and is not restricted to the configurations in this work. Hence, an IRS could be engineered in such a way, e.g. the number of elements, the elements' size and dimensions as well as the implemented topology, so as to satisfy a certain correlation requirement. Also, more advanced IRSs could employ adaptive techniques, where elements can dynamically be switched on/off \cite{CP} or can be reconfigured through a software-controllable fluidic structure \cite{KIT}, in order to achieve the required correlation.

\section{Conclusion}\label{conclusion}
In this work, we focused on the effect of spatial correlation between the elements of an IRS in the context of SWIPT. We evaluated the performance of power transfer and the performance of information transfer under spatial correlation with random, equal and optimal phase configurations. Closed-form expressions for the average harvested energy were analytically derived together with a closed-form approximation for the outage probability. We showed that correlation always benefits energy harvesting but information transfer can be negatively affected in the presence of correlation.

\begin{figure}[t]\centering
  \includegraphics[width=0.9\linewidth]{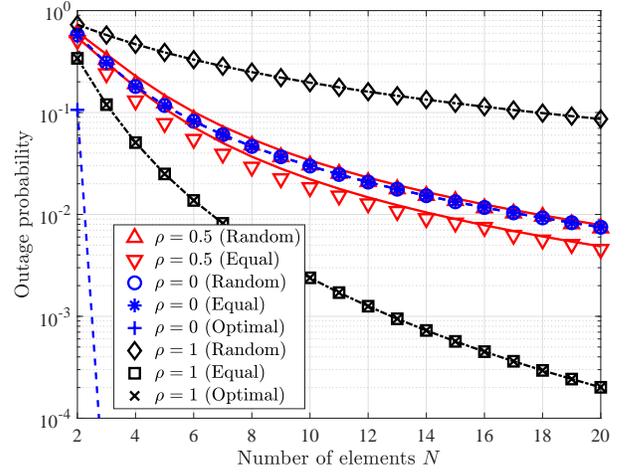}
  \caption{Outage probability with respect to $N$; simulation and theoretical results are depicted with markers and lines, respectively.}\label{fig2}
\end{figure}

\appendix
\subsection{Proof of Lemma \ref{moments_random}}\label{moments_random_prf}
We first provide the multivariate moment that is needed for the proofs of both Lemma \ref{moments_random} and Lemma \ref{moments_equal}. In particular, by using Isserlis' theorem \cite{PS}
\begin{align}
	&\E\{h_i h_k h_l^* h_m^*\} = \E\{h_i h_k\}\E\{h_l^* h_m^*\} + \E\{h_i h_l^*\}\E\{h_k h_m^*\}\nonumber\\
	&\qquad\qquad+ \E\{h_i h_m^*\}\E\{h_k h_l^*\} = \rho_{i,l} \rho_{k,m} + \rho_{i,m} \rho_{k,l},\label{multivar}
\end{align}
which follows from $\E\{h_i h_k\} = 0$ and $\E\{h_i h_l^*\} = \rho_{i,l}$, $\forall i,k,l$.

The first moment $\mu_1$ of $H$ is written as
\begin{align}
\mu_1 &= \E\{H\} = \E\left\{\left(\sum_{i=1}^M h_i g_i e^{\jmath\phi_i}\right)\left(\sum_{i=1}^M h_i^* g_i^* e^{-\jmath\phi_i}\right)\right\}\nonumber\\
&=\E\left\{\sum_{i=1}^M |h_i|^2 |g_i|^2 + \sum_{i\neq k} h_i g_i h_k^* g_k^* e^{\jmath(\phi_i-\phi_k)}\right\}.\label{eq1}
\end{align}
The first term is evaluated as follows
\begin{align}
\E\left\{\sum_{i=1}^M |h_i|^2 |g_i|^2\right\} = \sum_{i=1}^M \E\left\{|h_i|^2 |g_i|^2\right\} = M,
\end{align}
since $\E\left\{|h_i|^2\right\} = \E\left\{|g_i|^2\right\} = 1$. Due to the random phase shifts, the second term is zero since $\E\{e^{\jmath(\phi_i-\phi_k)}\} = 0$.

From \eqref{eq1}, we can derive the second moment as
\begin{align}
&\mu_2 = \E\{H^2\}\nonumber\\
&= \E\Bigg\{\Bigg(\sum_{i=1}^M |h_i|^2 |g_i|^2 + \sum_{i\neq k} h_i g_i h_k^* g_k^* e^{\jmath(\phi_i - \phi_k)}\Bigg)^2\Bigg\}.\label{eq2}
\end{align}
By expanding the above polynomial, we look at each term separately. Firstly, we have
\begin{align}
&\E\left\{\left(\sum_{i=1}^M |h_i|^2 |g_i|^2\right)^2\right\} = \E\left\{\sum_{i=1}^M |h_i|^4 |g_i|^4\right\}\nonumber\\
&\hspace{30mm}+ \E\left\{\sum_{i\neq k} |h_i|^2 |g_i|^2 |h_k|^2 |g_k|^2\right\}.\label{prf3}
\end{align}
Then,
\begin{align}
\E\left\{\sum_{i=1}^M |h_i|^4 |g_i|^4\right\} = \sum_{i=1}^M \E\left\{|h_i|^4 |g_i|^4\right\} = 4M,\label{prf1}
\end{align}
which follows from $\E\left\{|h_i|^4\right\} = \E\left\{|g_i|^4\right\} = 2$, and
\begin{align}
\E\left\{\sum_{i\neq k} |h_i|^2 |g_i|^2 |h_k|^2 |g_k|^2\right\} = \sum_{i\neq k} (\rho_{i,k}^2+1)^2,\label{prf2}
\end{align}
which follows from the fact that $\E\{|h_i|^2 |h_k|^2\} = \rho_{i,k}^2 + 1$ (see \eqref{multivar}). The second term of the expansion is
\begin{align}
&\!\!\E\left\{\left(\sum_{i\neq k} h_i g_i h_k^* g_k^* e^{\jmath(\phi_i - \phi_k)}\right)^2\right\}\nonumber\\
&\!\!= \E\left\{\sum_{i\neq k, l\neq m} h_i g_i h_k^* g_k^* h_l g_l h_m^* g_m^* e^{\jmath(\phi_i + \phi_l - \phi_k - \phi_m)}\right\},
\end{align}
which is non-zero only if $i = m$ and $k = l$. Thus, we have
\begin{align}
\E\left\{\sum_{i\neq k} |h_i|^2 |g_i|^2 |h_k|^2 |g_k|^2\right\} = \sum_{i\neq k} (\rho_{i,k}^2+1)^2,\label{prf4}
\end{align}
which follows as \eqref{prf2}. Finally, by using the same argument as before, the last term of the expansion is zero. Hence, the final result is obtained by substituting \eqref{prf1}, \eqref{prf2} and \eqref{prf4} in \eqref{prf3}.

\subsection{Proof of Lemma \ref{moments_equal}}\label{moments_equal_prf}
The proof follows similar steps as the one for random phase shifts. However, in this case, the second term in \eqref{eq1} is not necessarily zero. Therefore, for the derivation of both moments, we focus on the effect of this term. Specifically,
\begin{align}
&\E\left\{\sum_{i\neq k} h_i g_i h_k^* g_k^* e^{\jmath(\phi_i-\phi_k)}\right\} = \E\left\{\sum_{i\neq k} h_i g_i h_k^* g_k^*\right\}\nonumber\\
&= \sum_{i\neq k} \E\left\{h_i h_k^*\right\} \E\left\{g_i g_k^*\right\} = \sum_{i\neq k}\rho_{i,k}^2,
\end{align}
where $\E\{h_i h_k^*\} = \E\{g_i g_k^*\} = \rho_{i,k}$, since $h_i$ and $g_i$ are correlated to $h_k^*$ and $g_k^*$, respectively, which gives the first moment.

We now turn our attention to the second moment and evaluate the remaining two terms from the polynomial expansion of \eqref{eq2}. In particular, we first have
\begin{align}
&\E\left\{\!\left(\sum_{i\neq k} h_i g_i h_k^* g_k^*\right)^{\!2}\right\} = \E\left\{\sum_{i\neq k, l\neq m} \!\!h_i g_i h_k^* g_k^* h_l g_l h_m^* g_m^*\right\}\!,
\end{align}
where there are five cases to consider. The first one, $i = m$ and $k = l$, is given by \eqref{prf4}. Next, if $i = l$ and $k = m$,
\begin{align}
\E\left\{\sum_{i\neq k} h_i^2 g_i^2 h_k^{*2} g_k^{*2}\right\} = \sum_{i\neq k} (2\rho_{i,k}^2)^2,
\end{align}
since $\E\{h_i^2 h_k^{*2}\} = 2\rho_{i,k}^2$ (see \eqref{multivar}). If $i = l$ and $k \neq m$ (or $i \neq l$ and $k = m$),
\begin{align}
\E\left\{\sum_{i\neq k} h_i^2 g_i^2 h_k^* g_k^* h_l^* g_l^*\right\} = \sum_{i\neq k} (2\rho_{i,k}\rho_{i,l})^2,
\end{align}
as $\E\{h_i^2 h_k^* h_l^*\} = 2\rho_{i,k}\rho_{i,l}$. Moreover, if $i = m$ and $k \neq l$ (or $i \neq m$ and $k = l$), we have
\begin{align}
\E\left\{\sum_{i\neq k\neq l} |h_i|^2 |g_i|^2 h_k g_k h_l^* g_l^* \right\} = \sum_{i\neq k\neq l} (\rho_{i,k}\rho_{i,l} + \rho_{k,l})^2,
\end{align}
which follows from $\E\left\{|h_i|^2 h_k h_l^*\right\} = \rho_{i,k}\rho_{i,l} + \rho_{k,l}$. Finally, if $i \neq k \neq l \neq m$,
\begin{align}
&\E\left\{\sum_{i\neq k\neq l\neq m} h_i g_i h_k^* g_k^* h_l g_l h_m^* g_m^*\right\}\nonumber\\
&\qquad\qquad\qquad\qquad= \sum_{i\neq k\neq l\neq m} (\rho_{i,k}\rho_{l,m} + \rho_{i,m}\rho_{k,l})^2,
\end{align}
as $\E\left\{h_i h_k^* h_l h_m^*\right\} = \rho_{i,k}\rho_{l,m} + \rho_{i,m}\rho_{k,l}$.

The final term of the expansion is written as
\begin{align}
&2\E\Bigg\{\sum_{i=1}^M |h_i|^2 |g_i|^2\sum_{i\neq k} h_i g_i h_k^* g_k^*\Bigg\}\nonumber\\
&= 2\E\Bigg\{\sum_{i\neq k} |h_i|^2 |g_i|^2 h_i g_i h_k^* g_k^* + \sum_{i\neq k} |h_i|^2 |g_i|^2 h_i^* g_i^* h_k g_k\nonumber\\
&\hspace{4cm}+ \sum_{i\neq k\neq l} |h_i|^2 |g_i|^2 h_k g_k h_l^* g_l^*\Bigg\}\nonumber\\
&= 2\left(2\sum_{i\neq k} (2\rho_{i,k})^2 + \sum_{i\neq k\neq l} (\rho_{i,k}\rho_{i,l} + \rho_{k,l})^2\right),
\end{align}
which follows from $\E\{|h_i|^2 h_i h_k^*\} = \E\{|h_i|^2 h_i^* h_k\} = 2\rho_{i,k}$ and $\E\{|h_i|^2 h_k h_l^*\} = \rho_{i,k}\rho_{i,l} + \rho_{k,l}$. The proof is completed by putting together all the above.

\subsection{Proof of Theorem \ref{pop}}\label{op_prf}
We will employ moment matching to approximate the distribution of $H$ with a gamma distribution. Specifically, a gamma distribution with mean $\mu$ and variance $\sigma^2$ has shape parameter $\kappa = \mu^2/\sigma^2$ and scale parameter $\theta = \sigma^2/\mu$. Therefore, by taking into account Lemma \ref{moments_random} or Lemma \ref{moments_equal}, we derive $\kappa = \frac{\mu_1^2}{\mu_2 - \mu_1^2}$ and $\theta = \frac{\mu_2 - \mu_1^2}{\mu_1}$. As such, $H$ can be approximated as a gamma random variable with shape and scale given above. From \eqref{op}, we have
$
P_o = \PP\left\{H < \frac{(2^\tau - 1)(\zeta\sigma^2+\sigma^2_c)}{\zeta P_t (d_1 d_2)^{-\al}}\right\},
$
and the result follows by using the cumulative distribution function of a Gamma random variable.

\subsection{Proof of Proposition \ref{opc}}\label{opc_prf}
Let $\psi = \big\lvert \sum_{i=1}^M e^{\jmath\phi_i}\big\rvert^2$. Then, by employing the central limit theorem, it is easy to deduce that $\psi$ is exponentially distributed with parameter $1/M$. In addition, the cumulative distribution function of $|h|^2 |g|^2$ is $F_{|h|^2 |g|^2}(z) = 1 - 2\sqrt{z} K_1\left(2 \sqrt{z}\right)$ \cite{JDG}. Therefore, from \eqref{op} we have
$
P_o = \E_\psi\left\{1 - 2\sqrt{\frac{\xi}{\psi}} K_1\left(2 \sqrt{\frac{\xi}{\psi}}\right)\right\},
$
where $\xi$ is given by \eqref{xi}. The expectation is evaluated with $f_\psi(\psi) = (1/M)e^{-\psi/M}$, and the proposition is proven.

\end{document}